\documentclass[runningheads]{llncs}
\usepackage{graphicx}
\usepackage{amsmath, amssymb}
\usepackage{graphicx, subfigure}
\usepackage[ruled,linesnumbered]{algorithm2e}
\usepackage{makecell}
\usepackage{xcolor}
\usepackage{float}
\usepackage{tcolorbox}
\usepackage{url}

\usepackage{enumitem}
\setlist{nolistsep,leftmargin=*}
\setlength{\abovedisplayskip}{-5pt}
\setlength{\belowdisplayskip}{-5pt}
\usepackage[skip=1pt]{caption}

\def\sB{\boldsymbol B}
\def\sC{\boldsymbol C}
\def\sH{\boldsymbol H}

\def\SPAR{\textbf{SPAR}}

\def\S{\mathcal S}

\begin{document}
\title{Coded Merkle Tree: Solving Data Availability Attacks in Blockchains}

\author{Mingchao Yu\inst{1} \and 
              Saeid Sahraei\inst{1} \and 
              Songze Li\inst{2} \and
              Salman Avestimehr\inst{1} \and\\ 
              Sreeram Kannan\inst{2,3} \and 
              Pramod Viswanath\inst{2,4}}

\authorrunning{M. Yu et al.}

\institute{University of Southern California \\
\email{fishermanymc@gmail.com}, \email{ss\_805@usc.edu}, \email{avestimehr@ee.usc.edu} \and
Trifecta Blockchain\\
\email{songzeli8824@gmail.com}\and
University of Washington Seattle\\
\email{ksreeram@uw.edu} \and 
University of Illinois at Urbana-Champaign \\
\email{pramodv@illinois.edu}}

\maketitle
\begin{abstract}
In this paper, we propose coded Merkle tree (CMT), a novel hash accumulator that offers a constant-cost protection against data availability attacks in blockchains, even if the majority of the network nodes are malicious. A CMT is constructed using a family of sparse erasure codes on each layer, and is recovered by iteratively applying a peeling-decoding technique that enables a compact proof for data availability attack on any layer. Our algorithm enables any node to verify the full availability of any data block generated by the system by just downloading a $\Theta(1)$ byte block hash commitment and randomly sampling $\Theta(\log b)$ bytes, where $b$ is the size of the data block. With the help of only one connected honest node in the system, our method also allows any node to verify any tampering of the coded Merkle tree by just downloading $\Theta(\log b)$ bytes. We provide a modular library for CMT in {\sf Rust} and {\sf Python} and demonstrate its efficacy inside the {\sf Parity Bitcoin} client. 
\end{abstract}

\vspace{-10mm}
\section{Introduction}
\vspace{-3mm}
Blockchains (e.g., Bitcoin~\cite{nakamoto2008bitcoin} and Ethereum~\cite{wood2014ethereum}) maintain a ledger of ordered transactions,  organized into a chain of blocks. Starting from the genesis block, network nodes extend the ledger by creating and appending more blocks, following specific block generation rules (e.g., the longest-chain rule is used in Bitcoin \cite{nakamoto2008bitcoin}). The transactions in the received blocks are validated by {\em full nodes} which  download the entire block tree.
However, for better scalability, it is imperative for a blockchain to allow {\em light nodes}, which may only be interested in verifying some specific transactions.   

In Bitcoin~\cite{nakamoto2008bitcoin,bitcoinOperatingModes}, light nodes are implemented using the Simple Payment Verification (SPV) technique:  a Merkle tree is constructed for each block using the transactions as the leaf nodes, and the Merkle root is stored in the block header. Utilizing the Merkle root, a light node can verify the inclusion of any transaction in a block through  a Merkle proof. Light nodes and SPV have been leveraged extensively to scale computation and storage of blockchain systems over resource-limited nodes (e.g., smartphones)~\cite{cryptonite,BIP37,Electrum,LES,mcconaghy2016bigchaindb,xu2017epbc,frey2016bringing,dorri2017lsb,gervais2014privacy}.

Besides inclusion, what is more important for a light node is to validate the transaction based on the ledger state. Due to limited resources, a light node cannot download the entire ledger. Instead, it could use the depth of the block that contains this transaction as a proxy. That is, the deeper this block is buried into the chain, the more confident the light node is about the validity of the transaction. However, for it to work, a majority (in terms of hashing power, stakes, etc.) of full nodes must be honest and must follow protocol. Further, there is a significant \emph{tradeoff} between confirmation latency (due to the depth) and the security about transaction validity.



Therefore, efforts to study 1) the scenario of a light node being connected to {\em dishonest majority of full nodes}, and 2) how to achieve faster confirmation at light nodes are becoming a major research direction \cite{nakamoto2008bitcoin,mustafa_fraud_proof,eyal2016bitcoin,bano2017road}. The overall idea is to design new block structures that allow full nodes to generate and broadcast {\em succinct fraud proofs} of individual transactions. This way, a light node will be able to timely verify fraud transactions and blocks as long as it is connect to \emph{at least one} honest full node.
One efficient construction that utilizes the roots of the intermediate state Merkle trees after executing a subset of transactions 
is proposed in~\cite{mustafa_fraud_proof}. However, it is vulnerable to the so-called ``data availability attack'' described in \cite{mustafa_fraud_proof}, for which \cite{mustafa_fraud_proof} proposed an erasure code based solution. Stating the data availability attack formally and solving it comprehensively is the main goal of this paper.



\noindent{\bf Data availability attack}. A malicious block producer 1) publishes a block header, so that light nodes can check transaction inclusion; but 2) withholds a portion of the block (e.g., invalid transactions), so that it is impossible for honest full nodes to validate the block and generate the fraud proof.

Although the honest full nodes are aware of the data unavailability,  there is no good way to prove it. The best they can do is to raise an alarm without a proof. However, this is problematic because the malicious block producer can release the hidden parts \emph{after} hearing the alarm. Due to network latency, other nodes may receive the missing part before receiving the alarm and, thus, cannot distinguish who is prevaricating. Due to this, there is no reward and punishment mechanism that can properly reward honest full nodes while also deterring false alarms and denial-of-service attacks.


Therefore, for fraud proofs to work, light nodes must determine data availability by themselves. This leads to the following key question: {\em when a light node receives the header of some block, how can it verify that the content of that block is available to the network by downloading the least possible portion of the block}?


\noindent {\bf Need to encode the block}. Since a transaction is much smaller than a block, a malicious block producer only needs to hide a very small portion of a block. Such hiding can hardly be detected by light nodes unless the entire block is downloaded. However, by adding redundancy to the data through appropriate erasure codes \cite{lin2001error}, any small hiding on the origin block will be equivalent to making a significant portion of the coded block unavailable, which can be detected by light nodes through randomly sampling the coded block with exponentially increasing probability. As a counter measure, a malicious block producer could instead conduct coding incorrectly to prevent correct decoding. Light nodes rely on honest full nodes to detect such attacks and prove it through an incorrect-coding proof.

For example, an $(n,k)$ Reed-Solomon (1D-RS) code \cite{reed1960polynomial} encodes $k$ data symbols into $n$ coded symbols, and any $k$ out of these $n$ coded symbols can be used to decode the $k$ data symbols. Thus, to prevent decoding, a malicious block producer will have to make at least $n-k+1$ coded symbols unavailable, which yields a detection probability of $1-(1-k/n)^s$ after sampling $s$ distinct coded symbols uniformly at random. But an incorrect-coding proof will consist of $k$ coded symbols, which is of the same size as the original block and thus is too large. This cost is alleviated to $\sqrt{k}$ in \cite{mustafa_fraud_proof} by using two-dimensional RS codes (2D-RS), at the costs of reduced sampling efficiency, and increased block hash commitments of $2\sqrt{n}$ Merkle roots to verify the coding correctness within each dimension. In addition, 1D-RS and 2D-RS codes have a high decoding complexity of $O(k^2)$ and $O(k^{1.5})$, respectively.

In summary, with erasure coding, a light node pays 3 download costs for data availability, including block hash commitments, symbol sampling, and incorrect-coding proofs. Among them, the incorrect-coding proof cost must be minimized to defend fake proofs, for which both 1D-RS and 2D-RS are sub-optimal.

\noindent {\bf Our contributions}.
In this paper, we propose SPAR (SParse frAud pRotection),
the first data availability solution that promises order-optimal performance on \emph{all} the metrics, including 1) a constant block hash commitment size; 2) a constant sampling cost for a given confidence level on data availability; 3) a constant incorrectly-coding proof size; and 4) linear decoding complexity (Table \ref{tab:light_node_costs}).

\vspace{-3em}
\begin{table}[h]
    \centering
\caption{Light node download costs and full node decoding complexity ($b$: block size in bytes).}
    \scalebox{0.87}{
    \begin{tabular}{|c|c|c|c|c|}
    \hline
        ~&\makecell{hash commitment\\size (bytes)} & \makecell{\# of samples to gain\\ certain confidence\\ about data availability}&\makecell{incorrect-coding\\ proof size (bytes)}&\makecell{decoding\\complexity}\\\hline
        Uncoded & $O(1)$ & $O(b)$ & - & -\\\hline
         1D-RS& $O(1)$ & $O(1)$ & $O(b\log b)$  & $O(b^2)$ \\\hline
         2D-RS \cite{mustafa_fraud_proof}& $O(\sqrt{b})$ & $O(1)$ & $O(\sqrt{b}\log k)$&$O(b^{1.5})$\\\hline
         \SPAR& $O(1)$ & $O(1)$ & $O(\log b)$&$O(b)$\\\hline
    \end{tabular}
    }
    \label{tab:light_node_costs}
\end{table}
\vspace{-1.5em}

At the core of SPAR is a novel cryptographic hash accumulator called coded Merkle tree (CMT). Starting from the bottom, CMT iteratively encodes layers of the tree and uses the hashes of the coded layer as the data for the next layer. 
A light node can detect the availability of the \emph{entire tree} through the Merkle proofs of bottom layer leaves. With the entire tree available, SPAR uses a novel hash-aware peeling decoder and a special ensemble of random LDPC (low-density parity-check) codes to maximize sampling efficiency, minimize incorrect-coding proof to one parity equation, and achieves linear decoding complexity.

\noindent {\bf  SPAR and CMT implementation}. We have developed a complete and modular CMT library in {\sf Rust} and {\sf Python} \cite{cmt-lib}. We have also implemented SPAR in the {\tt Bitcoin Parity} client \cite{parity-btc}, which outperforms state of the art \cite{mustafa_fraud_proof} by more than 10-fold in hash commitments, incorrect coding proof, and decoding speed.     


\noindent {\bf Related works.} This work was inspired by pioneering research in \cite{mustafa_fraud_proof}, which proposes succinct fraud proofs and 2D-RS based data availability solution. Besides this work, coding also improves  scalability of blockchains in other areas: \cite{perard2018erasure} studies the coding efficiency of distributed storage systems~\cite{aguilera2005using,dimakis2011survey,dimakis2010network,rashmi2011optimal}. 
In a related vein, \cite{raman2017dynamic} uses a combination of Shamir's secret sharing \cite{shamir1979share} (for storing the headers and the encryption keys) and private distributed storage (for the blocks) to reduce the storage overhead while guaranteeing data integrity and confidentiality. \cite{li2018polyshard} uses Lagrange coding to simultaneously scale storage, computation, and security in a sharded blockchain~\cite{luu2016secure,kokoris2018omniledger}, via cross-shard coding. 


\vspace{-5mm}
\section{Security Model}
\vspace{-3mm}
The core functionality of a blockchain  is to produce,  verify, and accept/store valid data blocks in a consistent but decentralized manner. A data block, denoted by $\sB$, is a byte string of length $b$ that carries a batch of transactions. $\sB$ is valid for acceptance if and only if {\em every} single transaction in it is valid (e.g., enough balance, no double spending). 
Thus {\em incomplete} data blocks are tantamount to being unacceptable.
Data incompleteness is not a threat to a node that fully downloads the block. However, state-of-the-art blockchain systems also run light nodes which do not download the  blocks in entirety. We next describe these two types of nodes formally (see Fig.~\ref{fig:model}). 

\emph{Full nodes} are able to produce blocks (e.g., by batching submitted transactions), and to download and verify blocks produced by other full nodes. Upon acceptance, they store the entire block locally. Upon rejection, they broadcast a fraud proof to alert the other nodes. We note, however, that malicious full nodes do not necessarily follow such requirements, and can act arbitrarily. 

\emph{Light nodes} can only afford to download a small amount of data from each block and perform simple computations such as hash checks and fraud proof verification, but not to operate on whole data blocks. By accepting a block $\sB$, they only store its hash commitment $D=g(\sB)$. Here $g()$ is a hash accumulator such as Merkle tree generator, which will allow it to use $D$ to verify the inclusion of any transaction in $\sB$ through a Merkle proof. Without loss of generality, we assume light nodes are honest, as they are not able to deceive full nodes.


\begin{figure*}[t]
    \centering
 \includegraphics[width=0.4\textwidth]{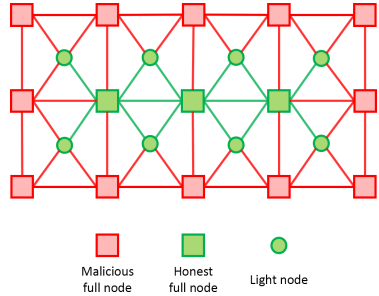}
 \caption{The connections between full nodes and light nodes. Every honest (green) node is connected to at least one honest full (square) node. A light (circle) node may connect to more malicious (red) full nodes than honest full nodes.}
    \label{fig:model}
    \vspace{-6mm}
\end{figure*}

We assume the following network model:

1. Reliable communication: Two directly connected nodes can reliably communicate via both unicast and broadcast without message loss or corruption.

2. Connectivity: Every honest (full and light) node is directly connected to at least one honest full node. In other words, we assume a connected sub-graph of honest full nodes, and all the light nodes are connected to it.
 
3. The network is synchronous. These three assumptions together means that a valid message sent by an honest node can be received by all appropriate honest nodes (e.g. blocks for honest full nodes, block headers and fraud proofs for light node) within a fixed delay if every honest node re-broadcasts it.

4. The network allows nodes to send messages anonymously.


Importantly, our network model allows  {\emph{dishonest majority}}, i.e.,  each light node can be directly connected to more malicious full nodes than honest ones. Due to this, a light node cannot determine the completeness of a block through its connected full nodes,  via a majority vote for instance.


A malicious block producer is thus motivated to conduct a {\em data availability attack}, where it 1) does not fully disclose $\sB$, so that honest full nodes are not able to verify $\sB$; and 2)
broadcasts $D$, so that itself and its colluding full nodes can forward $D$ to their connected light nodes and deceive them that the $\sB$ that satisfies $g(\sB)=D$ is valid for accepting. Thus, the key for a light node to protect itself from accepting a fraudulent block is to make sure that $\sB$ is fully available. This gives rise to the main problem we try to address in this paper:

\begin{tcolorbox}
\noindent \textbf{Data availability problem:} \emph{Upon receiving a hash commitment $D$, how can a light node efficiently verify that a data block $\sB$ that satisfies $g(\sB)=D$ is fully available to the system?}
\end{tcolorbox}

A simple strategy for a light node is to randomly sample portions of $\sB$, and determine that it is unavailable if it does not receive all requested portions. Since the size of a transaction is usually much smaller than the block, a malicious block producer only needs to hide a very small portion (say, e.g., a few hundred bytes) of a fraudulent block, which can hardly be detected through random sampling.

A malicious block producer could also conduct {\em selective disclosure}: when requested by light nodes, the malicious block producer may select a subset of the light nodes and fully disclose their requested portions, as long as the total disclosed portions do not reveal $\sB$. These light nodes will be deceived about the availability of $\sB$ and will accept it, as no fraud proof of $\sB$ can be produced.

\noindent Thus, as similarly done in~\cite{mustafa_fraud_proof}, we characterize the security of the above described system using the following measures: 

\textbf{\emph{Soundness}}: If a light node has determined that a data block is fully available, then at least one honest full node will be able to fully recover this data block within a constant delay.

\textbf{\emph{Agreement}}: If a light node has determined that a data block is fully available, then all the other light nodes in the system will determine that the data block is fully available within a constant delay.

Recently, an erasure coding-assisted approach was proposed in \cite{mustafa_fraud_proof} to improve sampling efficiency and suppress the data availability attack. In the next section, we will motivate this approach and overview the challenges it faces.

\vspace{-3mm}
\section{Overview of Erasure Coding Assisted Approach}
\vspace{-3mm}
An $(n,k)$ erasure code evenly partitions a block $\sB$ of $b$ bytes into $k$ data symbols of $\frac{b}{k}$ bytes each as $\sB=[m_1,\cdots,m_k]$, and linearly combines them to generate a coded block with $n>k$ coded symbols, $\sC=[c_1,\cdots,c_n]$. The ratio $r=k/n$ is called the coding rate. The $n$ hashes of these coded symbols are accumulated to obtain the hash commitment $D$ of $\sC$, which is published with $\sC$. With a good erasure code, a block producer's hiding of one data symbol is equivalent to making the value of many coded symbols unavailable to the system. In general, a pair of good erasure code and decoding algorithm yields a large \textbf{\em undecodable ratio} $\alpha$, which is the minimum fraction of coded symbols a malicious block producer needs to make unavailable to prevent full decoding. Such hiding can be caught by a light node with an exponentially increasing probability of $1-(1-\alpha)^s$ through randomly sample $s$ coded symbols when $n$ is large, indicating that $O(1)$ samples are sufficient. Below is an example.

\begin{example}\label{example:uncoded_coded}
\emph{\textbf{Uncoded v.s. coded sampling efficiency.} Given a block of 4 data symbols $[m_0, \cdots, m_3]$, a block producer generates 8 coded symbols as follows:}
\begin{equation}\label{eq:encoding_example}
\begin{cases}
c_0=m_0, ~c_1=m_1, ~c_2=m_2, ~c_3=m_3,\\
c_4=c_0+c_1,~c_5=c_1+c_2,~c_6=c_2+c_3,~c_7=c_3+c_0.
\end{cases}
\end{equation}
\noindent
\emph{To prevent decoding through hiding, a malicious block producer must either publish no more than 3 data symbols or no more than 5 coded symbols. Both will make at least 3 coded symbols unavailable to the system ($\alpha=\frac{3}{8}$). Such unavailability can be caught with a probability of $1-\frac{5}{8}\cdot\frac{4}{7}=64.3\%$ after randomly sampling 2 distinct coded symbols. In contrast, without coding, the hiding of one data symbol can be caught with a probability of only  $1-\frac{3}{4}\cdot\frac{2}{3}=50\%$.}
\end{example}

To counter erasure coding assisted random sampling, a malicious block producer could conduct an \textbf{\em incorrect-coding attack}: It generates coded symbols that fail the parity equations (the equations describing the linear relations between coded symbols in Example \ref{example:uncoded_coded}) specified by the erasure code, and generates $D$ using these invalid coded symbols. This way, it can pass light node random sampling through hiding only one data symbol and publishing most of the coded symbols, which will not allow honest full nodes to correctly decode $\sB$.

Fortunately, this attack can be detected by honest full nodes by comparing the decoded block with the commitment $D$. Upon detection, an honest full node can generate an \textbf{\em incorrect-coding proof}, which consists of the coded symbols of failed parity equation(s) and appropriate hash commitments, so that light nodes can verify them and reject the block. Using Example \ref{example:uncoded_coded}, an incorrect coding proof about $c_4=c_0+c_1$ could be 
$c_0$ and $c_1$ with matching Merkle proofs, plus the Merkle proof of $c_4$, which, however, does not match the value of $c_0+c_1$.

To keep incorrect coding proofs small, \cite{mustafa_fraud_proof} applies 2D-RS (2-dimensional Reed-Solomon) code. The $k$ data symbols are placed as a $\sqrt{k}\times\sqrt{k}$ square, then a $(\sqrt{n}, \sqrt{k})$ RS code is applied to every row/column. The resulted $2\sqrt{n}$ rows/columns yield $2\sqrt{n}$ Merkle roots, which are downloaded by light nodes as block header. Each root allows a light node to verify the associated row/column by decoding it using any $\sqrt{k}$ coded symbols of it (from incorrect-coding proof) and reproducing the root. Thus, 2D-RS offers light nodes 1) a header cost of $O(\sqrt{b})$, 2) a sampling cost of $O(\log b)$, and 3) an incorrect-coding proof size of $O(\sqrt{b}\log b)$. Here $\log b$ is due to logarithmic growth of Merkle proof size with $b$.

In this paper, we propose SPAR (SParse frAud pRotection), the first solution to the data-availability problem that is order-optimal in all the above three metrics: a header cost of $O(1)$, a sampling cost of $O(\log b)$, and an incorrect-coding proof size of $O(\log b)$. To this end, SPAR leverages four core components:
\begin{enumerate}
    \item a novel hash accumulator named coded Merkle tree (CMT), which encodes every layer of the tree to protect the availability of the entire tree. This way, the Merkle proof of every coded symbol will be available, which will enable every parity equation to be committed and verified \emph{alone};
    \item a dedicated sampling mechanism that enables a light node to check the availability of the entire CMT by sampling $O(\log b)$ bytes plus one Merkle root;
    \item a hash-aware decoding algorithm that is able to detect and prove any single failed parity equation, provided the Merkle proofs of all the coded symbols;
    \item a special ensemble of random LDPC (low-density parity check) codes with a constant parity equation size and a constant undecodable ratio under the above hash-aware decoding algorithm, which protects all CMT layers equally.
\end{enumerate}
\vspace{-4mm}
\section{Detailed Description of SPAR}\label{sec:SPAR_details}
\vspace{-3mm}
In this section, we describe the four core components of SPAR: the construction of the coded Merkle tree by the (honest) block producer, the sampling mechanism of the light nodes, the decoding and alerting operations of the honest full nodes, and the erasure codes used by SPAR. At the end of this section, we will summarize the action space of each node in the network.

\vspace{-5mm}

\subsection{Construction of coded Merkle tree}
\label{sec:cmt_construction}
\vspace{-1mm}
In SPAR, an honest full node detects and proves incorrect-coding
using the membership proofs of \emph{all} the $d$ coded symbols in one parity equation and the values of at least $d-1$ of these coded symbols. Since any parity equation can be compromised, a light node needs to make sure the membership proofs of \emph{all} the $n$ coded symbols are available at honest full nodes. In other words, it needs to make sure the entire Merkle tree is available.

To this end, we propose CMT. At a high level, CMT applies erasure coding to every layer of the tree, where the data symbols of a layer are generated using the hashes of the coded symbols of its child layer. This way, a light node can check the availability of every layer through random samplings, whilst an honest full node can detect and prove the incorrect coding at any layer, with the help of the hashes of this layer provided at its parent layer.

More specifically, given a block of $k$ data symbols, a rate-$r$ ($r \leq 1$) systematic erasure code with an undecodable ratio of $\alpha$ is applied to generate $n=k/r$ coded symbols, where the first $k$ are the original data symbols and the remaining $n-k$ are called parity symbols (hence the name systematic). Then the hashes of every $q$ coded symbols are batched as one data symbol for the next (parent) layer. This yields a total of $n/q$ data symbols for the next layer, which will be encoded using a smaller (in terms of $k$) rate-$r$ systematic code with the same undecodable ratio. This iterative encoding and batching process stops once there are only $t$ ($t\geqslant 1$) hashes in a layer. These $t$ hashes are the root of the CMT, and will be included in the block header and published with the original data block.

CMT layer size reduces at a rate of $qr$. Thus, $qr>1$ for CMT to converge.  In addition, to enable efficient sampling of both data and parity symbols (will discuss next), batching is interleaved, namely, the $q$ coded symbols whose hashes are batched together consist of $qr$ data symbols and $q(1-r)$ parity symbols. An example of CMT with $k=16$, $r=\frac{1}{2}$, $q=4$, and $t=4$ is illustrated in Fig.~\ref{fig:coded_Merkle_tree_interleaved}. Indeed, a classic Merkle tree is a special CMT with $r=1$ and $q=2$.

\begin{figure*}[t]
    \centering
    \includegraphics[width=\linewidth]{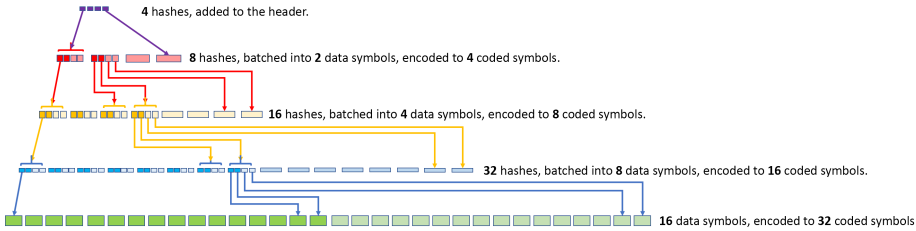}
    \caption{Coded Merkle tree for $k=16$ data symbols using rate $r=\frac{1}{2}$ erasure codes. Each data symbol of a higher layer is constructed by batching the hashes of $qr=2$ data symbols and $q(1-r)=2$ parity symbols of its child layer.}
    \label{fig:coded_Merkle_tree_interleaved}
    \vspace{-6mm}
\end{figure*}

\vspace{-5mm}

\subsection{Sampling Mechanism of Light Nodes}
\label{sec:samplingalgorithm}

\vspace{-2mm}
In SPAR, a light node randomly samples the base layer coded symbols with their Merkle proofs to decide the availability of the base layer. The special structure of CMT allows them to further utilize these proofs to efficiently sample higher layer symbols to decide the availability of higher layers.

Similar to a classic Merkle tree, the Merkle proof of a base layer symbol in CMT consists of all the sibling hashes between this symbol and the root. The only difference is that the number of sibling hashes per layer is now $q-1$ instead of 1, which effectively provides the light node one data symbol from every intermediate layer. Thus, when a light node randomly samples $s$ distinct base layer coded symbols, the associated Merkle proofs will automatically sample,  at \emph{no} extra cost, $s$ distinct data symbols from every intermediate layer w.h.p.

To properly check the availability of an intermediate layer,
a light node should also randomly sample about $(1-r)s$ parity symbols from this layer. To avoid downloading extra Merkle proofs for these parity symbols and to minimize the correlation between the samplings\footnote{Otherwise, the malicious block producer can hide the highly correlated symbols of the same layer together to reduce light nodes' detection probability.}, SPAR samples parity symbols of intermediate layers probabilistically: For every pair of parent and child intermediate layer, if a parent layer data symbol is sampled, then with probability $1-r$, one of its $q(1-r)$ child parity symbols (thanks to interleaved batching) will be sampled uniformly at random. 
Thus, the response size of one sampling request will be:
\vspace{-0.5em}
\begin{equation}\label{eq:merkle_proof_with_parity}
    \frac{b}{k}+\left[y(q-1) + yq(1-r)\right]\log_{qr}\frac{k}{rt},
\vspace{-2mm}
\end{equation}
where $\frac{b}{k}$ is the base layer symbol size, $y$ is the hash  size (e.g., 32 bytes), $y(q-1)$ is the size of the partial data symbol from an intermediate layer for Merkle proof, $yq(1-r)$ is the average size of probabilistically sampled parity symbol from an intermediate layer, and $\log_{qr}\frac{k}{rt}$ is the number of layers. See Fig.~\ref{fig:Merkle_proof} for sampling response of a coded symbol on the based layer of the CMT in Fig.~\ref{fig:coded_Merkle_tree_interleaved}.

Finally, to counter selective disclosure conducted by the malicious block producer, a light node will make the $s$ requests separately, anonymously, with replacement, and with some delay between every two requests. This will prevent the malicious block producer from selectively deceiving any particular light node, or deceiving the set of light nodes that make requests at the beginning. Therefore, every light node will have the same chance to catch data availability attack.
\begin{figure*}[t]
    \centering
    \includegraphics[width=\linewidth]{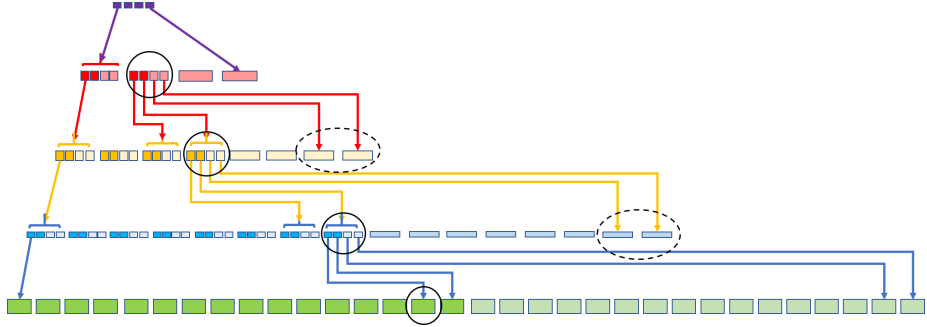}
    \caption{Merkle proof in CMT, and probabilistic sampling of parity symbols in intermediate layers. The solidly circled symbols constitute a base layer coded symbol and its Merkle proof. For the 2 dash-circled yellow parity symbols, with a probability of $1-r=0.5$, one of them (with equal chance) will be included in the proof for sampling purpose. So will the 2 dash-circled blue parity symbols.}
    \label{fig:Merkle_proof}
    \vspace{-6mm}
\end{figure*}


\vspace{-2em}

\subsection{Hash-aware peeling decoder and incorrect-coding proof}\label{sec:peeling_decoder}
A hash-aware peeling decoder is similar to conventional LDPC peeling decoder. Given the hashes of all the $n$ coded symbols and $(1-\alpha)n$ coded symbols of a layer, it iteratively solves degree-1 parity equations and check each decoded symbol against its hash and associated parity equations (Algorithm \ref{alg:peeling}). This way, the detection and proof of incorrect-coding is minimized to one parity equation.

The key condition for the peeling decoder to work is that the hashes of all the coded symbols are available. This is assured by CMT: By first downloading the root, the decoder will have all the hashes needed to decode the previous layer. Once this layer is successfully decoded, the decoded data symbols will provide all the hashes needed to decode its child layer. This top-down decoding continues until the data block is decoded, or incorrect-coding is detected at one of the layers. To prove a failed parity equation that consists of $d$ coded symbols, the decoder only needs to provide the Merkle proofs of these coded symbols, and the value of $d-1$ coded symbols. Note that the higher the failed layer, the shorter the Merkle proof of each symbol in the incorrect-coding proof.

In addition, the peeling decoder only works if 1) there are $(1-\alpha)n$ coded symbols available, and 2) that these coded symbols allow the recovery of all the $k$ data symbols. While the first condition is checked by light nodes through random sampling, the second condition requires us to find, for every layer, a erasure code whose undecodable ratio is $\alpha$ under peeling decoding. The best performance is achieved if the codes are extremely sparse (with a small $d$) and have a large $\alpha$. We now present such an ensemble of LDPC codes.


\vspace{-4mm}

\subsection{Construction of Erasure Code}
\label{sec:erasurecodeconstruction}
\vspace{-2mm}
\def\S{\mathcal S}
\def\c{\boldsymbol c}
An $(n,k)$ erasure code can be described by an $n\times (n-k)$ parity check matrix $\sH$, where each column of $\sH$ describes a parity equation, such that $\sC\sH=\overrightarrow{0}$ for any valid codeword $\sC$. In addition, every \emph{stopping set} of $\sH$ corresponds to a set of coded symbols whose hiding will prevent the full recovery of data symbols using peeling decoder.
For an $ n\times (n-k)$ parity check matrix $\sH$, a set of rows $\tau \subset [n]$ is called a {\em stopping set} if no column in $\sH_\tau$ has one non-zero element. Here $\sH_\tau$ is the submatrix of $\sH$ that only consists of the rows in $\tau$.

\begin{algorithm}[t]
\caption{Hash-aware peeling decoding algorithm}\label{alg:peeling}
\label{alg:peeling}
{\textbf{Inputs}: the hashes of all $n$ coded symbols and $(1-\alpha)n$ coded symbols;}\\
{\textbf{Initial check}: checks all the degree-0 parity equations (i.e., those whose coded symbols are all known). If any parity equation is failed, report an incorrect-coding proof and exit;}\\
\While{not all the $k$ data symbols are recovered}{
Find a degree-1 parity equation, which only has one unknown coded symbol;\\
Recover this coded symbol and verify it with its hash. If failed, report an incorrect-coding proof and exit;\\
Check all the associated degree-0 parity equations. If any parity equation is failed, report an incorrect-coding proof and exit.}
\end{algorithm}

Correspondingly, there is no parity equation that includes exactly one coded symbol among those indexed by $\tau$. Thus, if this set of coded symbols are hidden, there is no degree-1 parity equation to recover them. Since the  peeling decoder is essential for us to construct small incorrect-coding proof, the undecodable ratio $\alpha$ of a block is equivalent to the \emph{stopping ratio} of $\sH$, which is the size of the smallest stopping set divided by $n$.

\def\mJ{\boldsymbol J}

While CMT admits any erasure codes,  SPAR uses the method introduced in \cite{burshtein2004asymptotic,luby2001efficient} and analyzed in \cite{orlitsky2005stopping} for its proved ability to create, with high probability, parity matrices with a large stopping ratio. Given two integers $c$ and $d$ that satisfies $nc=(n-k)d$, we first generate an $nc\times(n-k)d$ permutation matrix $\mJ$ (a random row permutation of an $nc\times(n-k)d$ identity matrix). We then partition $\mJ$ into $n\times (n-k)$ slices, where each slice is a $c\times d$ sub-matrix. Then $\sH_{i,j}=1$ if and only if slice-$(i,j)$ contains an odd number of 1s, for $i\in[1:n]$ and $j\in[1:n-k]$. Such a random $\sH$ has the following three critical properties:
\begin{enumerate}
    \item It has a maximum row weight of $c$, and a maximum column weight of $d$;
    \item It has a non-zero probability to have a stopping ratio of at least $\alpha^*$, where $\alpha^*$ is a critical stopping ratio inherent to this method and is independent of $k$;
    \item It is NP-hard to find the minimum stopping set and determine the stopping ratio of $\sH$.
\end{enumerate}

Property 1 implies that the corresponding LDPC code has a maximum parity equation size of $d$. Property 2 implies that we could provide the same undecodable ratio (thus same sampling requirements) for all layers. Both are desirable.

Nevertheless, Property 2 and 3 together imply that we, as the developers, are not able to determine whether the LDPC codes we generate are good ($\alpha\geqslant \alpha^*$) or not, for any reasonably large $k$ (e.g., $k=1024$).

Fortunately, this problem can be easily solved through a \textbf{\em bad-code proof}. If an honest full node cannot fully decode the $k$ data symbols after receiving $(1-\alpha^*)n$ coded symbols, then this code is bad, and its small undecodable set has been found and hidden by a (very strong) malicious block producer. In this case, the honest full node can prove this code bad by broadcasting the indices of the $\alpha^*n$ coded symbols it is missing. Upon receiving and verify this bad-code proof, all the nodes in the system reject the associated block, and regenerate a code for the failed layer using an agreed random seed. This seed can be drawn from a pre-defined number sequence or the block header of a previous block, so that no consensus protocol is needed. Alternatively, distributed random number generation algorithms such as \cite{syta2017scalable} and \cite{bonneau2015bitcoin} can also be used for strictly unbiased randomness and consensus.

In other words, we solve the NP-hard problem of finding good codes by exploiting the computational resources of malicious party. Once it finds a small undecodable set and hides it, the system can easily detect this, reject the block, and update the code. This way, the system will settle at a good code for every layer eventually. As we will show in the next section, the probability of generating good codes is extremely high, so that SPAR can settle at good codes very quickly without having light nodes accept any fraud blocks. In addition, since bad code is a rare event, a heavy incentive/punishment scheme can be applied to deter false bad-code proof. Thus, the download and verification cost of bad-code proof is amortized to negligible throughout the course of the system.


\vspace{-4mm}
\subsection{Summary of the Actions of Different Node Types}
\begin{itemize}
\item \noindent {\bf Block producer (full node):} (a) It generates CMT and broadcasts the CMT root to all nodes, as well as broadcasts the entire original block (not CMT, as it can be retrieved using the original) to the full nodes only.  (b) On receiving sample requests from the light nodes, respond to them.

\item \noindent{\bf Light node:} (a) On receiving a new CMT root (or a CMT root of a pending block from a new full node), it makes separate, anonymous, and intermittent sampling requests with replacement to full nodes who claim that the block is available, as described in Section \ref{sec:samplingalgorithm}. (b) On receiving a sample, it broadcasts it to all connected full nodes. (c) If a node receives all requested samples, it assumes the block is available. (d) If a node does not receive all requested samples within a fixed time, it ``pends'' the block (i.e., keeps it in pending status). (e) If a node receives an incorrect-coding proof or bad-code proof, it rejects the block. In case of bad-code proof, it will also update the erasure code of the failed layer.


\item \noindent{\bf Other full node.} (a) On receiving valid samples, it tries to recover the data block through both downloading the original data block from the other full nodes and collecting coded symbols forwarded by the light nodes. It will decode the tree from top to bottom using a hash-aware peeling decoder. (b) It rebroadcasts the received valid samples. (c) If an incorrect coding or a bad code has been detected, it will send the corresponding proof and reject this block. (d) If it has received/fully decoded a data block and verified it, it will declare the availability of this block to all other nodes and respond to sample requests from light nodes. 
\end{itemize}

\vspace{-4mm}
\section{Performance Analysis}\label{sec:performance}
\label{sec:performance}
\vspace{-2mm}
\subsection{Security}
\begin{theorem}\label{thm:security}
In SPAR,  a block producer cannot cause the soundness and agreement to fail with a probability lower than
\begin{align*}
   P_f \leq \max \left\{(1-\alpha_{\min})^s, \quad 2^{\underset{i}{\max}\left[H(\alpha_i) n_i-ms\log\frac{1}{1-\alpha_i}\right]} \right\}.
\end{align*}
Here $n_i$ and $\alpha_i$ are the number of symbols and undecodable ratio on the $i$th layer of CMT, $\alpha_{\min} \triangleq \min_i \alpha_i$, and $s$ is the number of coded symbols each light node samples from the base layer.
\end{theorem}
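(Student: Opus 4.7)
The plan is to reduce both soundness and agreement to a single combinatorial event: to cause either property to fail, the adversary must render at least $\alpha_i n_i$ coded symbols of some CMT layer $i$ unavailable, since otherwise the hash-aware peeling decoder of Section~\ref{sec:peeling_decoder} would reconstruct the block via the stopping-set characterization of $\alpha_i$ from Section~\ref{sec:erasurecodeconstruction}. Once this reduction is in place, the two terms in the bound arise from two distinct ways the hiding can evade detection.

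First, I would observe that the CMT Merkle proof of each base-layer sample automatically delivers one data symbol per intermediate layer, together with a parity symbol included at the prescribed probability, so the $s$ base-layer queries of any single light node translate into $s$ effectively uniform samples on every layer. Hence if the adversary has hidden a set $H_i \subseteq [n_i]$ with $|H_i| \ge \alpha_i n_i$, the probability that all $s$ queries of a given light node miss $H_i$ on layer $i$ is at most $(1-\alpha_i)^s$. Maximizing over $i$ yields the first term $(1-\alpha_{\min})^s$, which bounds the per-light-node soundness failure probability in the regime where the adversary adaptively serves a specific targeted light node's queries while keeping $|H_i|$ above the undecodable threshold.

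Second, I would bound the probability that the adversary can choose a hidden set $H_i$ such that none of the $ms$ aggregate queries across the $m$ honest light nodes touches $H_i$; this event implies that every light node accepts while no honest full node can decode, simultaneously violating soundness and agreement. For a fixed $H_i$ of size $\alpha_i n_i$ this probability is $(1-\alpha_i)^{ms}$, and a union bound over the $\binom{n_i}{\alpha_i n_i}\le 2^{H(\alpha_i)n_i}$ possible hidden sets, followed by a maximization over layers $i$, produces the second term. Because the anonymous, intermittent, with-replacement sampling of Section~\ref{sec:samplingalgorithm} prevents the adversary from distinguishing one light node's queries from another's, these two regimes jointly cover all effective adversary strategies, and their pointwise maximum upper-bounds $P_f$.

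\paragraph{Main obstacle.}
The most delicate step will be justifying that the interleaved probabilistic parity-sampling rule really yields a uniform and (conditionally) independent sample per intermediate layer, even though samples on different layers are coupled through the shared Merkle path; the effective per-query detection probability on layer $i$ must be shown to equal $\alpha_i$ so that the clean $(1-\alpha_i)^s$ and $(1-\alpha_i)^{ms}$ expressions hold simultaneously at every layer. A secondary subtlety is ruling out hybrid adaptive strategies in which the adversary only partially answers queries based on ordering or timing; this should follow from the anonymity and delay assumptions, but must be verified in order to argue that the maximum of the two regimes is indeed a valid upper bound and not loosened by a strategy interpolating between them.
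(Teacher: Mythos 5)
Your proposal follows essentially the same route as the paper's proof: the same two-regime case analysis (the adversary hides at least an $\alpha_i$ fraction of some layer, giving the per-light-node bound $(1-\alpha_{\min})^s$, versus hiding less and relying on the $ms$ aggregate samples missing a set of size $\alpha_i n_i$, bounded by the union/counting argument $\binom{n_i}{\alpha_i n_i}(1-\alpha_i)^{ms}\le 2^{H(\alpha_i)n_i - ms\log\frac{1}{1-\alpha_i}}$), together with the observation that soundness failure is the only way agreement can fail. The only difference is presentational: you flag the uniformity of per-layer sampling and adaptive partial answering as obstacles to be verified, points the paper's proof likewise treats implicitly.
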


\begin{proof}
{\em Soundness}:  Soundness fails if a light node thinks that a block is available, and no full node is able to reconstruct the entire coded Merkle tree. We note that the reconstruction fails if any layer of the CMT cannot be recovered correctly. Let us focus on a single layer $i$ with $n_i$ coded symbols and an undecodable ratio of $\alpha_i$, and assume that the malicious block producer hides $\alpha$ fraction of the coded symbols (and does not respond to requests for those symbols).

Case-1: Consider the case of $\alpha \geq  \alpha_i$. The probability of soundness failure for a node is given by the probability that a node receives all $s$ symbols that it samples, this probability is $(1-\alpha)^s \leq (1-\alpha_i)^s \leq (1-\alpha_{\min})^s$.  

Case-2: Consider the case of $\alpha < \alpha_{i}$. The soundness failure occurs if a full node cannot decode the entire block or is unable to furnish a incorrect-coding proof. The full node will fail to accomplish these tasks only when it is able to receive fewer than  $1-\alpha_{i}$ fraction of symbols. Define $Z_i$ to be the total number of distinct symbols collected by the honest full node ($Z_i \in \{0,1,..,n_i\}$). Let $m$ be the total number of light nodes, then $m \cdot s$ is the total number of i.i.d. samples. Now we have 
\begin{align}
    P(Z_i \leq (1-\alpha_i)n_i) &\leq {n_i \choose \alpha_i n_i}\frac{(n_i-\alpha_in_i)^{ms}}{n^{ms}},\label{eq:count}\\
    &\leq 2^{H(\alpha_i) n_i}(1-\alpha_i)^{ms}, \\
    &= 2^{H(\alpha_i) n_i-ms\log\frac{1}{1-\alpha_i}} \leq 
    2^{\underset{i}{\max}\left[H(\alpha_i) n_i-ms\log\frac{1}{1-\alpha_i}\right]}.
\end{align}
Here (\ref{eq:count}) is by counting the number of sampling instances that provide less then $(1-\alpha_i)n_i$ distinct symbols. $H(p) = p\log\frac{1}{p} + (1-p)\log\frac{1}{1-p}$ is the binary entropy function. It is apparent that we would need $m$ large to make the above bound vanish exponentially with $s$.

The probability of soundness failure is smaller than the maximum probability of the two cases. 

{\em Agreement}: We will argue here that soundness implies agreement for our protocol. As defined, soundness ensures that a honest full node is able to decode the block. Once a honest full node decodes the block, it will let all light nodes know that it has that block. The light nodes have either already accepted the block or have ``pend''-ed the block (the light nodes could not have rejected the block since it is a valid block). The light nodes that pended the block will query with the honest full node and eventually accept the block. Thus soundness implies agreement (since now every light node agrees on the availability of the block). 
\end{proof}

Theorem~\ref{thm:security}
implies that the security of SPAR increases exponentially with the number of samples each light node takes ($s$), when the number of light nodes ($m$) is linear with the block size. 

\vspace{-5mm}
\subsection{Costs and Complexity}
\vspace{-2mm}
A light node has three download costs: 1) the header, which is the CMT root of size $t$; 2) the random sampling, and 3) the incorrect-coding proof. In CMT, the header is the CMT root of size $t$. The sampling cost can be computed using the average parity-symbol-sampled Merkle proof size given in (\ref{eq:merkle_proof_with_parity}) to be:
\begin{eqnarray}
s \left(\frac{b}{k} + y\left(2q-1-qr\right) \log_{qr}\frac{k}{rt}\right)=O(\log k)=O(\log b),
\label{eqn:sampling_cost}
\end{eqnarray}
where $b$ is the size of a block, and the equations hold due to that 1) $s$ is a constant; and 2) $b/k$ is the base layer symbol size, which is a constant. The incorrect-coding proof size can be similarly computed as

\vspace{-0.5em}

\begin{eqnarray}
\frac{(d-1)b}{k} + dy(q-1) \log_{qr}\frac{k}{rt}=O(\log b),
\label{eqn:incorrect_proof_cost}
\vspace{-0.5em}
\end{eqnarray}
where the first term is the size of $d-1$ coded symbols, and the second term is the size of $d$ Merkle proofs.
Finally, since the hash-aware peeling decoder decodes one coded symbol using $d-1$ coded symbols in one parity equation, the decoding complexity is $O(1)$ per symbol and, thus, is $O(b)$ in total.

\subsection{Choice of parameters}

Our first key parameter is the coding rate $r$. A smaller $r$ means more parity symbols and thus a potentially larger undecodable ratio and less sampling. But it will also increase the height of CMT, the Merkle proof size, and the decoding complexity. For a reasonable tradeoff, we choose $r = 0.25$.

Given $r$, the next two parameters we should decide are a pair $(c,d)$ that satisfies $c/d = 1- r= 0.75$ for the random LDPC code generator, where $d$ is the maximum parity equation size. This gives us $c = 0.75d$ and requires us to find the critical undecodable ratio of the ensemble as a function of $d$, which is provided in Table \ref{tab:stopping_versus_d} based on the analysis in \cite{orlitsky2005stopping}.

\vspace{-1em}
\begin{table}[htbp]
\vspace{-6mm}
    \centering
    \caption{The critical undecodable ratio $\alpha^*$ as a function of $d$}
    \begin{tabular}{|c|c|c|c|c|c|}
    \hline
    $d$     & 4 & 8 & 12 & 16 & 20\\
    \hline
    $\alpha^*$ &   0.0795&   0.124 &  0.111&
   0.0981 &
   0.0877\\
    \hline
    \end{tabular}
    \label{tab:stopping_versus_d}
\vspace{-6mm}
\end{table}

Evidently, $d=8$ maximizes the critical undecodable ratio. In addition, it also admits a small incorrect coding proof that only requires only 7 coded symbols and 8 Merkle proofs. As a result, we choose $(c,d) = (6,8)$.

\subsection{How quickly does SPAR settle at a good erasure code?}
Due to random code generation, each layer of SPAR  eventually settles at a good code (with an undecodable ratio of at least $\alpha^*$) after a few bad codes have been deployed by the system and potentially utilized by malicious block producer to hide the data. We study the number of such attacks (note that they will never succeed) and updates before SPAR can settle: intuitively, this number can be computed as  $(1-P(\alpha<\alpha^*))^{-1}-1$, where $P(\alpha<\alpha^*)$ is the probability that a randomly generated code has an undecodable ratio smaller than $\alpha^*=0.124$. Using an upper bound on $P(\alpha<\alpha^*)$  characterized in~\cite{orlitsky2005stopping} we can derive the settlement speed of SPAR as below. We note that most of the layers of CMT will immediately settle at a good code upon  launching. The only exception is the layer with $n=256$, which will settle after 7.7 bad codes, but without any fraudulent blocks been accepted. The proof is in Appendix~\ref{sec:failure_probability}. 


\begin{theorem}
\label{thm:probs}
Using the random $(6, 8)$-LDPC code in Section \ref{sec:erasurecodeconstruction}, the expected number of bad erasure codes ($\alpha<\alpha^*=0.124$) a CMT layer with $n$ coded symbols will use before it settles at a good code ($\alpha\geq 0.124$) is approximated in Table \ref{tab:expected_successful_attack}. 

\begin{table}[htbp]
\vspace{-8mm}
\centering
\caption{Number of bad code before settlement.}
\begin{tabular}{|c|c|c|c|c|c|c|}
\hline 
    $n$ & 256 & 512 & 1024 & 2048 & 4096 & $>4096$\\\hline
    $P(\alpha<\alpha^*)$ & 0.886 & 5.3e-2 & 2.0e-3 & 1.3e-3 & 3.2e-4 & $<$3.2e-4\\\hline
    \# bad codes & 7.7 & 0.06& 0.002 &0.001 & 0.0003 & $<0.0003$\\
    \hline
\end{tabular}
\label{tab:expected_successful_attack}
\vspace{-8mm}
\end{table}
\end{theorem}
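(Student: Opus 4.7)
The plan is to model the number of bad codes encountered before settlement as a geometric random variable. Since the random LDPC code generated for a fresh layer is drawn i.i.d.\ from the ensemble in Section~\ref{sec:erasurecodeconstruction}, and the events ``code $i$ is bad'' are independent across regenerations, the number of bad codes $N$ before the first good one satisfies $N \sim \mathrm{Geom}(1-P(\alpha<\alpha^*))$, giving $\mathbb{E}[N] = P(\alpha<\alpha^*)/(1-P(\alpha<\alpha^*))$. So the table is obtained once we have the per-$n$ upper bounds on $P(\alpha<\alpha^*)$ in the second row; the third row is just arithmetic applied to the second row (e.g.\ $0.886/(1-0.886)\approx 7.77$, $0.053/(1-0.053)\approx 0.056$, etc.).

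The substantive work therefore reduces to proving the upper bounds on $P(\alpha<\alpha^*)$ shown in Table~\ref{tab:expected_successful_attack}. First I would recall that for the permutation-based ensemble constructed in Section~\ref{sec:erasurecodeconstruction} with column weight $d=8$ and row weight $c=6$, the code is bad iff its parity-check matrix $\sH$ admits a stopping set of size at most $\alpha^* n$. By the union bound,
\begin{equation*}
P(\alpha<\alpha^*) \;\le\; \sum_{s=1}^{\lfloor \alpha^* n\rfloor} \mathbb{E}\bigl[\,\#\{\text{stopping sets of size }s\}\,\bigr].
\end{equation*}
The expected stopping-set count is exactly what is computed in~\cite{orlitsky2005stopping} for this ensemble: each term admits a closed form in terms of binomial coefficients counting the ways to pick $s$ rows of $\sH$ and the ways for the $nc$ edges of the associated Tanner graph to land so that no column of $\sH_\tau$ has odd weight $1$. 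Plugging in $(c,d)=(6,8)$ and evaluating this sum for each $n \in \{256, 512, 1024, 2048, 4096\}$ yields the numerical bounds in the second row of the table.

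The key steps in order are therefore: (i) state the geometric-distribution observation and reduce $\mathbb{E}[N]$ to a function of $P(\alpha<\alpha^*)$; (ii) invoke the union bound over small stopping sets; (iii) import the exact expected-stopping-set formula from~\cite{orlitsky2005stopping} for the $(c,d)$-regular permutation ensemble; (iv) numerically evaluate the resulting sum at each $n$ with threshold $\alpha^* = 0.124$; and (v) convert each $P(\alpha<\alpha^*)$ into the corresponding expected number of bad codes. Because the full derivation is numerical and borrowed from prior work, I would defer the detailed computations to Appendix~\ref{sec:failure_probability} as the paper already indicates.

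The main obstacle is the $n=256$ row. Here $\alpha^* n \approx 31.7$, so the union bound must be summed over $s=1,\ldots,31$, and for such small $n$ the individual terms are not tiny: the bound comes out close to $1$ (namely $0.886$), and it takes care to show that it is still a valid upper bound rather than a vacuous one. I would verify this by evaluating the stopping-set enumerator exactly for small $s$ (where each summand can be written explicitly) and bounding the tail $s$ near $\alpha^* n$ using the asymptotic exponent from~\cite{orlitsky2005stopping}. For the larger values of $n$, concentration kicks in and the bound decays rapidly, so those entries should follow with little additional effort once the $n=256$ case is handled.
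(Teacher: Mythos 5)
Your reduction of the expected number of bad codes to a geometric random variable, giving $\mathbb{E}[N]=P/(1-P)$, is exactly what the paper does (it writes this as $(1-P(\alpha<\alpha^*))^{-1}-1$ in the main text), and your plan to import a stopping-set probability bound for the $(c,d)=(6,8)$ permutation ensemble from \cite{orlitsky2005stopping} is also the right source. However, there is a genuine gap in how you propose to obtain the second row of the table. The paper does not evaluate a bound at the threshold $\alpha^*=0.124$ at all: it invokes the specific bound of Theorem~8 of \cite{orlitsky2005stopping} (an infimum over $\delta$ of an exponential term in $\gamma(\theta)$ plus a sum of combinatorial terms $I_i$, not a bare union bound over expected stopping-set counts), and it observes that at $\alpha^*=0.124$ this bound \emph{degenerates to the trivial value $1$} for small $n$ such as $n=256$. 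The tabulated numbers are instead obtained by evaluating the bound at a strictly smaller ratio $0.116$ and using $P(\alpha n<0.116\,n)$ as an \emph{approximation} to $P(\alpha<0.124)$ --- which is precisely why the theorem is stated with the word ``approximated,'' and why the appendix adds the remark that SPAR's oversampling of intermediate layers absorbs the gap between $0.116$ and $0.124$. Your proposed workaround for the $n=256$ row (exact enumeration of small stopping sets plus an asymptotic tail bound at $0.124$) is not what produces the value $0.886$, and there is no evidence it would yield a nontrivial bound there; note also that an upper bound at threshold $0.116$ is not a valid upper bound on $P(\alpha<0.124)$, so the threshold shift must be presented as an approximation, not as a bound.

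In short: step (i) of your plan matches the paper; steps (ii)--(iv) differ in a way that matters, because the direct union-bound-at-$0.124$ route is vacuous exactly in the regime ($n=256$, and to a lesser extent $n=512$) where the table's interesting entries live, and the paper's workaround --- evaluating the Orlitsky et al.\ bound at the reduced ratio $0.116$ and justifying the substitution via oversampling --- is the missing idea you would need to supply.
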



\vspace{-4mm}
\section{Implementation for Bitcoin and Experiments}\label{sec:numerical}
\vspace{-2mm}
We developed in {\sf Rust} a {\sf Coded Merkle Tree} library~\cite{cmt-lib} for {\sf Parity Bitcoin}~\cite{parity-btc} clients (see Appendix \ref{sec:cmt_library} for more details). Our library integrates seamlessly into the current mainnet implementation, and requires minimal change on the block data structure (only need to add CMT root to the block header). Note, however, that this change is incompatible with existing Bitcoin clients. Developing a SPAR-protected Bitcoin testnet and a Bitcoin Improvements proposal (for Bitcoin Core) are ongoing research activities outside the scope of this paper. 


We combine the CMT library with the performance analysis in Section \ref{sec:performance}, and numerically evaluate SPAR's light node download costs (header, sampling, and incorrect-coding proof) and full node decoding speed, for a wide range of block sizes (Table \ref{tab:numierical_para}), and compare them with the 2D-RS based solution proposed in~\cite{mustafa_fraud_proof} using its {\sf C++}/{\sf Python} implementation \cite{dataAvailable,rs_vitalik_repo}.

\vspace{-2em}
\begin{table}[h]
\caption{Experiment Parameter Configuration}
    \centering
    \begin{tabular}{|c|c|c|}
        \hline
         \textbf{parameter} & \textbf{value} & \textbf{notes}\\\hline
         symbol size (B) & 256 & ~\\\hline
         base layer $k$ & $2^{12}$ to $2^{22}$ & block size is thus 1 to 1024 MB\\\hline
         coding rate $r$ & 0.25 & thus $n=4k$\\\hline
         hash size (B) & 32 & SHA256 is used\\\hline
         target confidence & 99\% & each light node keeps sampling until it is 99\% confident\\\hline
         \multicolumn{3}{|c|}{\textbf{SPAR specific parameters}}\\\hline
         LDPC sparsity & (6, 8) & each parity equation has at most 8 coded symbols\\\hline
         stopping ratio $\beta$ & 0.124 & $0.124n$ symbols must be hidden to prevent decoding\\\hline
         batching factor $q$ & 8 & CMT layer size reduction rate is $qr=2$ as ordinary trees\\\hline
         CMT root size $t$ & 256 hashes & the same as 2D-RS header size for 1MB blocks\\\hline
    \end{tabular}
    \label{tab:numierical_para}
\end{table}

\vspace{-2em}
\textbf{Header} (Fig. \ref{fig:header_cost}): A SPAR light node only downloads fixed $t=256$ hashes in header, whilst 2D-RS requires $1+2\sqrt{n}$. Thus, the header download cost of SPAR becomes much smaller than 2D-RS with growing block size. For a 64MB block, the cost is only 0.01\% of the block size in SPAR, but is 0.1\% in 2D-RS.

\textbf{Incorrect-coding proof} (Fig. \ref{fig:proof_cost}): A SPAR incorrect-coding proof only involves $d-1=7$ coded symbols and their Merkle proofs, whilst 2D-RS requires $\sqrt{k}$. Thus, the incorrect-coding proof download cost of SPAR becomes much smaller than 2D-RS with growing block size. For a 64MB block, the cost is only 0.051\% of the block size in SPAR, but is 0.48\% in 2D-RS.

\textbf{Sampling cost} (Fig. \ref{fig:sampling_cost}):
2D-RS has a higher undecodable ratio of $25\%$ compared to SPAR's $12.4\%$. Thus, for 99\% confidence, $s=17$ distinctive samples are enough in 2D-RS, whilst SPAR requires $s=35$ if the adversary is strong enough to find, with NP-hardness, the size-$0.124n$ stopping set. But under a realistically weak adversary that randomly selects CMT symbols to hide, SPAR only requires $s=8$ because our LDPC ensemble can tolerate an average of $47\%$ missing symbols. On the other hand, the over-sampling of each layer increases SPAR's sampling cost. Thus, although both techniques offer $O(\log k)$ sampling costs that quickly reduces with growing block size, the cost of SPAR is about 10$\sim$16 (resp. 2.5$\sim$4) times of 2D-RS under strong (resp. weak) adversaries. However, in practice, one can further reduce SPAR sampling cost by increasing the header size $t$, thus reducing the size of the Merkle proof of each symbol. 

\textbf{Decoding speed} (Fig. \ref{fig:decoding_speed}): SPAR's sparse and binary encoding, at its current implementation level is already   over 10 times faster than 2D-RS for all the tested block sizes. 




\begin{figure}[t]
\centering
\subfigure[Header cost]{\includegraphics[width=0.49\linewidth]{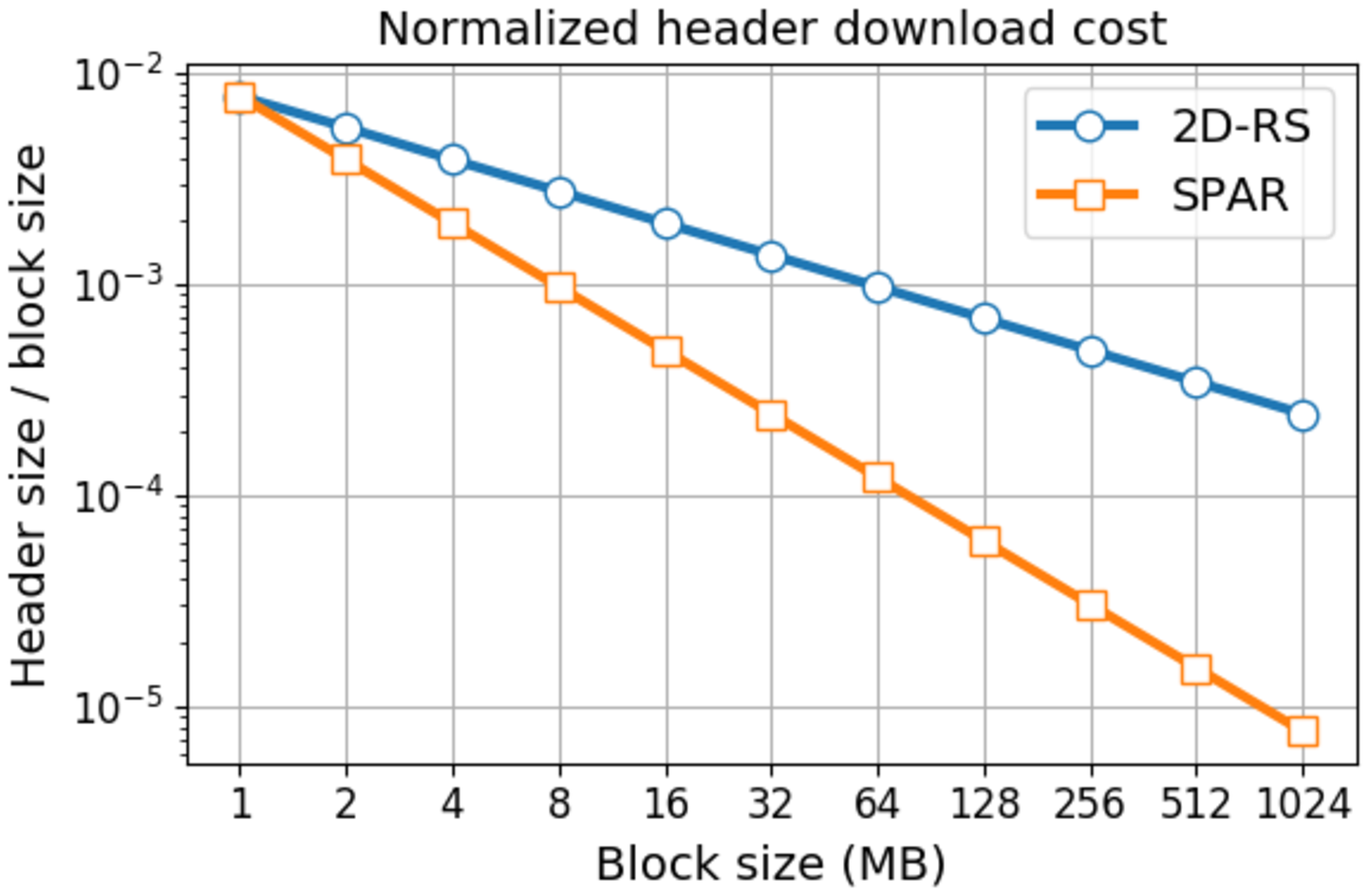}\label{fig:header_cost}}
\subfigure[Incorrect-coding proof size]{\includegraphics[width=0.49\linewidth]{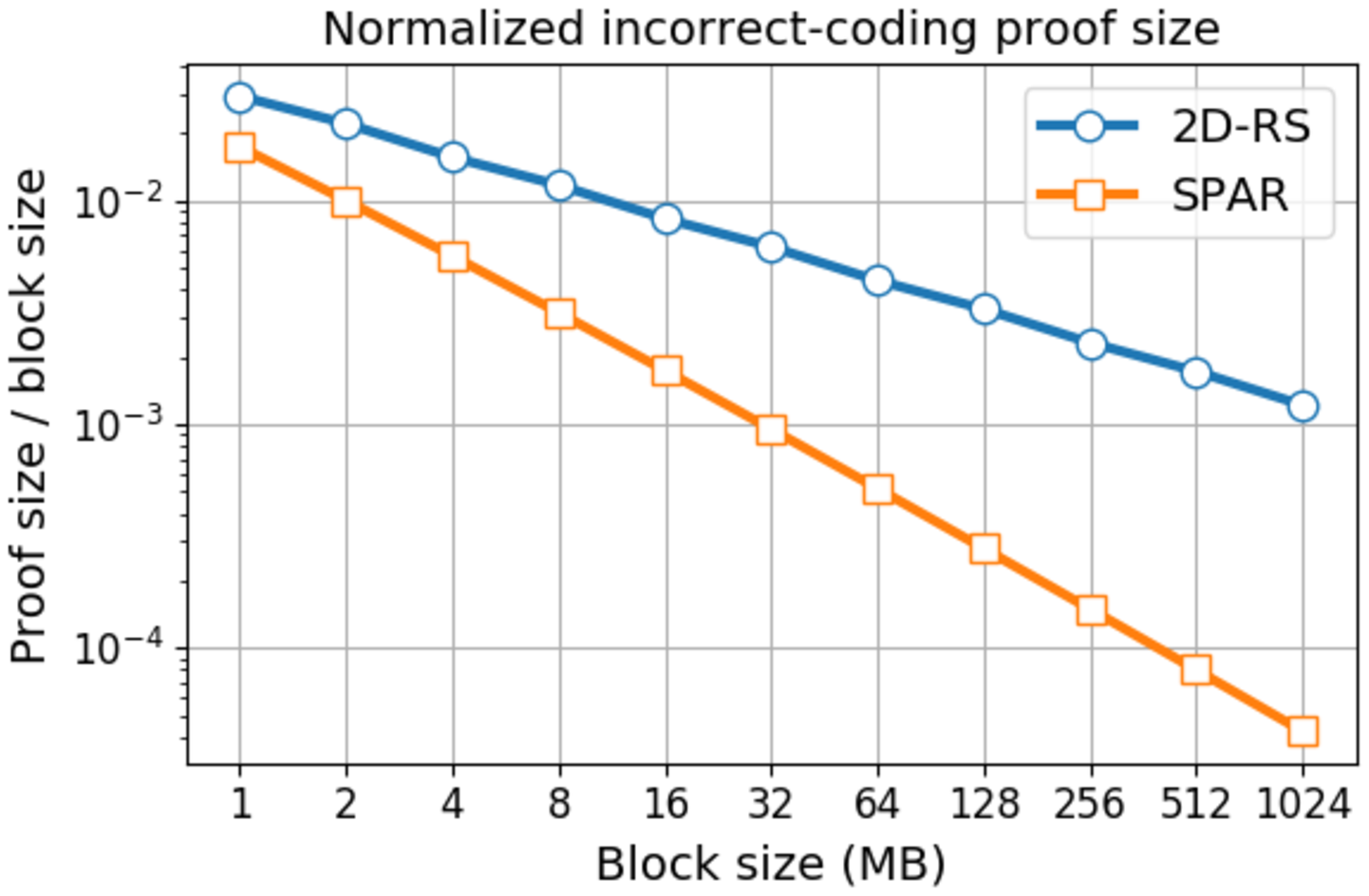}\label{fig:proof_cost}}
\subfigure[Sampling cost]{\includegraphics[width=0.49\linewidth]{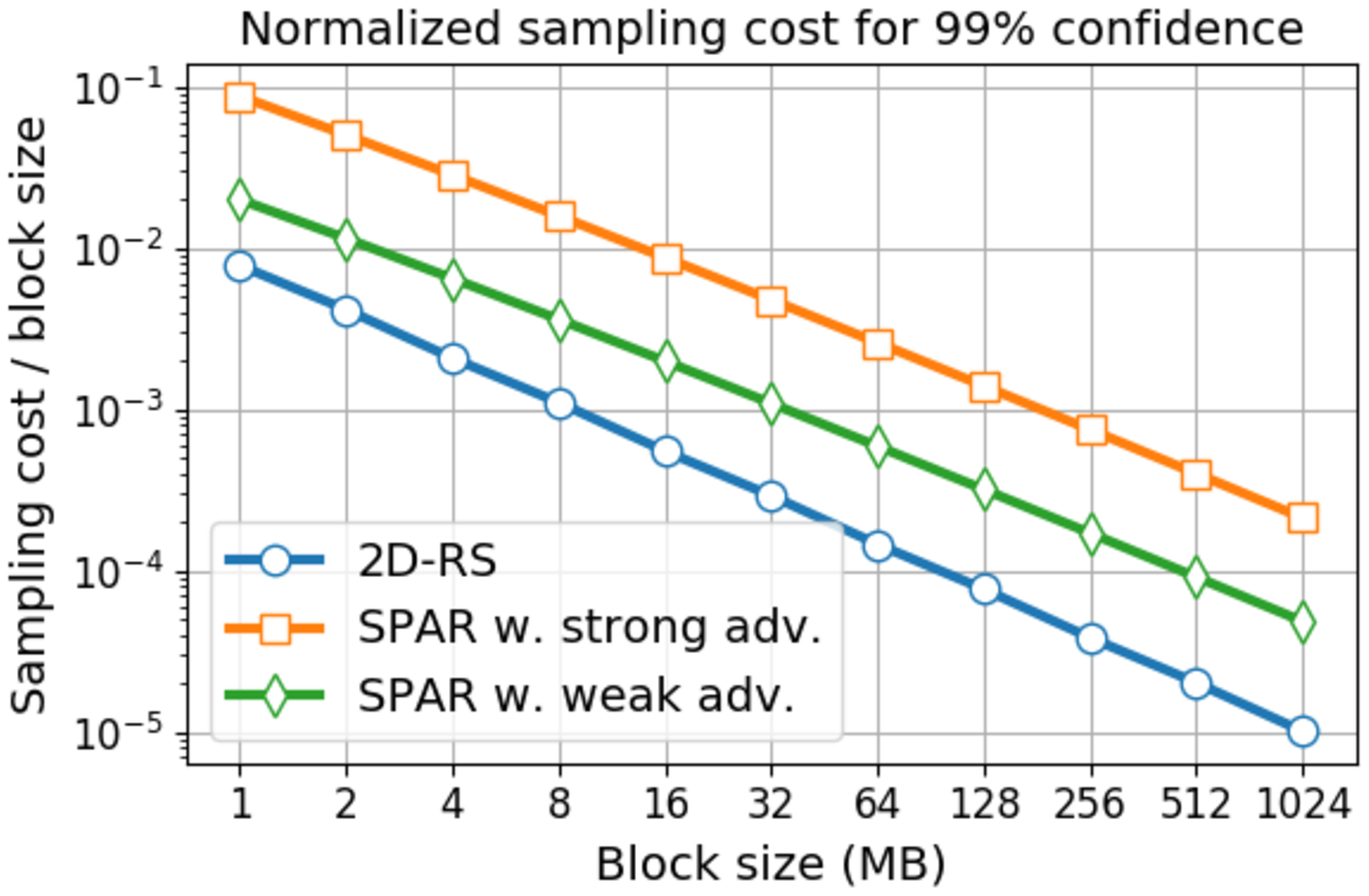}\label{fig:sampling_cost}}
\subfigure[Decoding speed]{\includegraphics[width=0.49\linewidth]{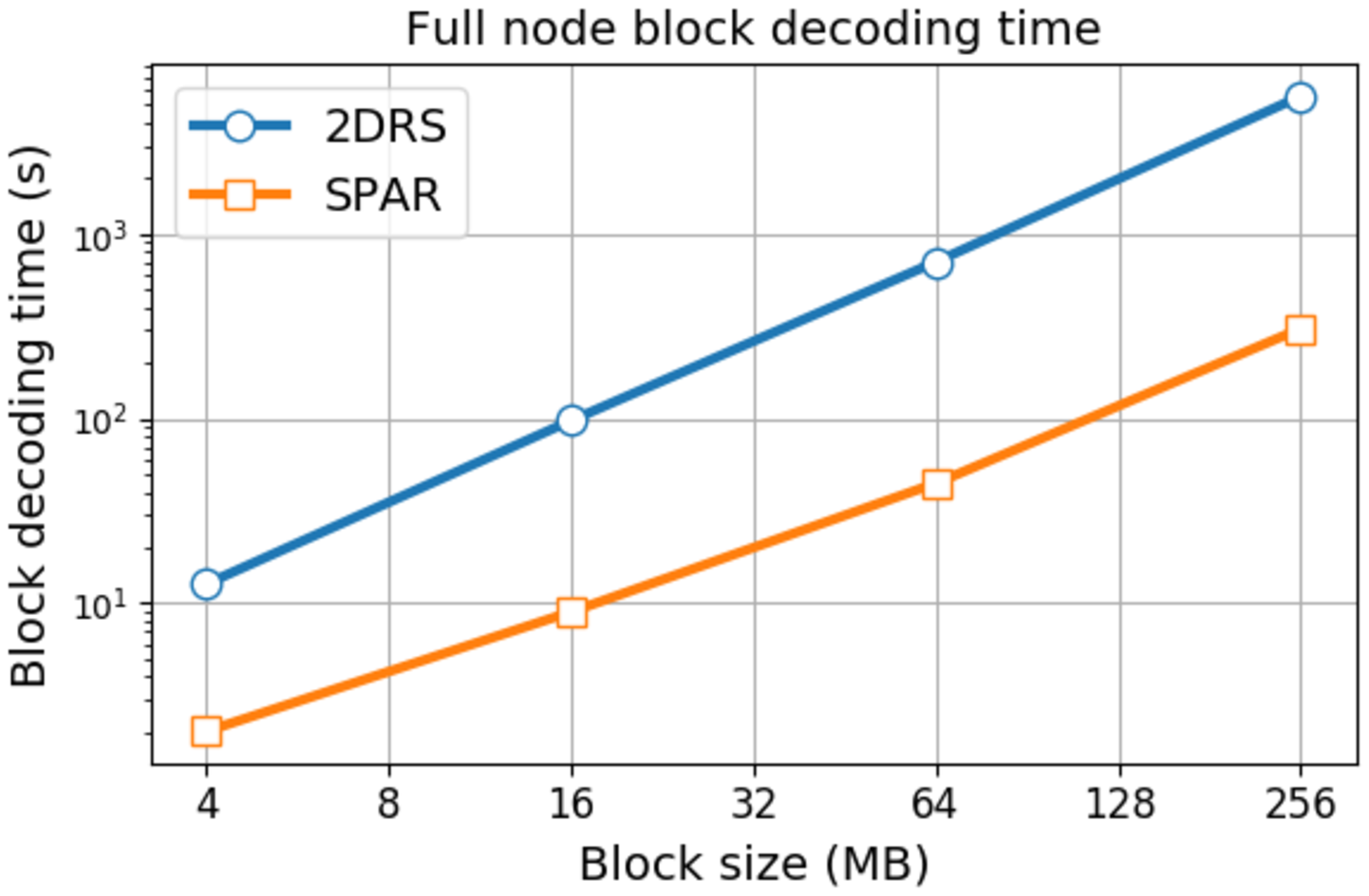}\label{fig:decoding_speed}}
\caption{Communication costs and decoding speed of SPAR and 2D-RS.}
\vspace{-6mm}
\end{figure}
\vspace{-6mm}
\section{Conclusion and Discussions}
\vspace{-4mm}
By iteratively applying a special ensemble of LDPC codes to every layer of a Merkle tree and batching the hashes of each coded layer into the data symbols of the next layer, we invented a novel hash accumulator called coded Merkle tree (CMT). Built upon CMT, we proposed a novel data availability verification system called SPAR,  which allows the availability and integrity of the entire tree to be checked at constant costs. 


SPAR can play a key role in scaling blockchain systems that incorporate light nodes because it empowers these nodes with real-time verification of data availability and integrity at small and constant costs. SPAR can also be used to scale the communication of sharded blockchain systems (e.g.,~\cite{luu2016secure,kokoris2018omniledger,li2018polyshard}), where full nodes of one shard operate as light nodes of other shards, as SPAR allows them to efficiently check the availability and integrity of blocks in other shards. 

Integrating SPAR into existing blockchain systems requires minimum changes and no extra bandwidth consumption. An honest block producer only needs to broadcast the original data block as usual and attach the CMT root in the block header. This is sufficient for other full nodes to reproduce the CMT and offer sampling services for light nodes. Our library for CMT  in {\sf Rust} for {\sf Parity Bitcoin} clients maintains the same API  as the standard Merkle tree module. Noting that classic Merkle trees are indeed special CMTs with coding rate $r=1$ and batching factor $q=2$, our library readily replaces the standard module and is backward compatible. 

\section*{Acknowledgement}
This research was conducted under the auspices of Trifecta Blockchain Inc.

\bibliographystyle{splncs04}
\bibliography{CMT_arxiv.bib}

\begin{thebibliography}{10}
\providecommand{\url}[1]{\texttt{#1}}
\providecommand{\urlprefix}{URL }
\providecommand{\doi}[1]{https://doi.org/#1}

\bibitem{bitcoinOperatingModes}
Bitcoin operating modes,
  \url{https://bitcoin.org/en/operating-modes-guide#introduction}, accessed on
  May 12, 2019

\bibitem{cmt-lib}
Coded merkle tree library, \url{https://github.com/songzLi/coded_merkle_tree}

\bibitem{cryptonite}
Cryptonite, \url{http://cryptonite.info/}, accessed on June 24, 2019

\bibitem{Electrum}
Electrum bitcoin wallet, \url{https://electrum.org/#home}, accessed on June 24,
  2019

\bibitem{LES}
Light ethereum subprotocol,
  \url{https://wiki.parity.io/Light-Ethereum-Subprotocol-(LES)}, accessed on
  June 24, 2019

\bibitem{rs_vitalik_repo}
Sample {R}eed-{S}olomon code implementation,
  \url{https://github.com/ethereum/research/tree/master/erasure_code/ec65536},
  accessed on Sept. 21, 2019

\bibitem{BIP37}
Connection bloom filtering (2012),
  \url{https://github.com/bitcoin/bips/blob/master/bip-0037.mediawiki}

\bibitem{aguilera2005using}
Aguilera, M.K., Janakiraman, R., Xu, L.: Using erasure codes efficiently for
  storage in a distributed system. In: 2005 International Conference on
  Dependable Systems and Networks (DSN'05). pp. 336--345. IEEE (2005)

\bibitem{mustafa_fraud_proof}
Al-Bassam, M., Sonnino, A., Buterin, V.: Fraud and data availability proofs:
  {M}aximising light client security and scaling blockchains with dishonest
  majorities. e-print arXiv:1809.09044  (2018)

\bibitem{bano2017road}
Bano, S., Al-Bassam, M., Danezis, G.: The road to scalable blockchain designs.
  USENIX; login: magazine  (2017)

\bibitem{burshtein2004asymptotic}
Burshtein, D., Miller, G.: Asymptotic enumeration methods for analyzing ldpc
  codes. IEEE Transactions on Information Theory  \textbf{50}(6),  1115--1131
  (2004)

\bibitem{dataAvailable}
Buterin, V.: A note on data availability and erasure coding,
  \url{https://github.com/ethereum/research/wiki/A-note-on-data-availability-and-erasure-coding},
  accessed on Sept. 21, 2019

\bibitem{dimakis2010network}
Dimakis, A.G., Godfrey, P.B., Wu, Y., Wainwright, M.J., Ramchandran, K.:
  Network coding for distributed storage systems. IEEE transactions on
  information theory  \textbf{56}(9),  4539--4551 (2010)

\bibitem{dimakis2011survey}
Dimakis, A.G., Ramchandran, K., Wu, Y., Suh, C.: A survey on network codes for
  distributed storage. Proceedings of the IEEE  \textbf{99}(3),  476--489
  (2011)

\bibitem{dorri2017lsb}
Dorri, A., Kanhere, S.S., Jurdak, R., Gauravaram, P.: Lsb: A lightweight
  scalable blockchain for iot security and privacy. e-print arXiv:1712.02969
  (2017)

\bibitem{eyal2016bitcoin}
Eyal, I., Gencer, A.E., Sirer, E.G., Van~Renesse, R.: Bitcoin-ng: A scalable
  blockchain protocol. In: 13th $\{$USENIX$\}$ Symposium on Networked Systems
  Design and Implementation ($\{$NSDI$\}$ 16). pp. 45--59 (2016)

\bibitem{frey2016bringing}
Frey, D., Makkes, M.X., Roman, P.L., Ta{\"\i}ani, F., Voulgaris, S.: Bringing
  secure bitcoin transactions to your smartphone. In: Proceedings of the 15th
  International Workshop on Adaptive and Reflective Middleware (ARM 2016).
  pp.~3--1. ACM (2016)

\bibitem{gervais2014privacy}
Gervais, A., Capkun, S., Karame, G.O., Gruber, D.: On the privacy provisions of
  bloom filters in lightweight bitcoin clients. In: Proceedings of the 30th
  Annual Computer Security Applications Conference. pp. 326--335. ACM (2014)

\bibitem{kokoris2018omniledger}
Kokoris-Kogias, E., Jovanovic, P., Gasser, L., Gailly, N., Syta, E., Ford, B.:
  Omniledger: A secure, scale-out, decentralized ledger via sharding. In: 2018
  IEEE Symposium on Security and Privacy (SP). pp. 583--598. IEEE (2018)

\bibitem{li2018polyshard}
Li, S., Yu, M., Avestimehr, S., Kannan, S., Viswanath, P.: Polyshard: Coded
  sharding achieves linearly scaling efficiency and security simultaneously.
  e-print arXiv:1809.10361  (2018)

\bibitem{lin2001error}
Lin, S., Costello, D.J.: Error control coding. Pearson (2004)

\bibitem{luby2001efficient}
Luby, M.G., Mitzenmacher, M., Shokrollahi, M.A., Spielman, D.A.: Efficient
  erasure correcting codes. IEEE Transactions on Information Theory
  \textbf{47}(2),  569--584 (2001)

\bibitem{luu2016secure}
Luu, L., Narayanan, V., Zheng, C., Baweja, K., Gilbert, S., Saxena, P.: A
  secure sharding protocol for open blockchains. In: Proceedings of the 2016
  ACM SIGSAC Conference on Computer and Communications Security. pp. 17--30.
  ACM (2016)

\bibitem{mcconaghy2016bigchaindb}
McConaghy, T., Marques, R., M{\"u}ller, A., De~Jonghe, D., McConaghy, T.,
  McMullen, G., Henderson, R., Bellemare, S., Granzotto, A.: Bigchaindb: a
  scalable blockchain database. white paper, BigChainDB  (2016)

\bibitem{nakamoto2008bitcoin}
Nakamoto, S., et~al.: Bitcoin: A peer-to-peer electronic cash system  (2008)

\bibitem{orlitsky2005stopping}
Orlitsky, A., Viswanathan, K., Zhang, J.: Stopping set distribution of ldpc
  code ensembles. IEEE Transactions on Information Theory  \textbf{51}(3),
  929--953 (2005)

\bibitem{perard2018erasure}
Perard, D., Lacan, J., Bachy, Y., Detchart, J.: Erasure code-based low storage
  blockchain node. e-print arXiv:1805.00860  (2018)

\bibitem{raman2017dynamic}
Raman, R.K., Varshney, L.R.: Dynamic distributed storage for scaling
  blockchains. arXiv preprint arXiv:1711.07617  (2017)

\bibitem{rashmi2011optimal}
Rashmi, K.V., Shah, N.B., Kumar, P.V.: Optimal exact-regenerating codes for
  distributed storage at the msr and mbr points via a product-matrix
  construction. IEEE Transactions on Information Theory  \textbf{57}(8),
  5227--5239 (2011)

\bibitem{reed1960polynomial}
Reed, I.S., Solomon, G.: Polynomial codes over certain finite fields. Journal
  of the society for industrial and applied mathematics  \textbf{8}(2),
  300--304 (1960)

\bibitem{shamir1979share}
Shamir, A.: How to share a secret. Communications of the ACM  \textbf{22}(11),
  612--613 (1979)

\bibitem{syta2017scalable}
Syta, E., Jovanovic, P., Kogias, E.K., Gailly, N., Gasser, L., Khoffi, I.,
  Fischer, M.J., Ford, B.: Scalable bias-resistant distributed randomness. In:
  2017 IEEE Symposium on Security and Privacy (SP). pp. 444--460. Ieee (2017)

\bibitem{parity-btc}
Technologies, P.: The parity bitcoin client,
  \url{https://github.com/paritytech/parity-bitcoin}, accessed on Sept. 21,
  2019

\bibitem{wood2014ethereum}
Wood, G., et~al.: Ethereum: A secure decentralised generalised transaction
  ledger. Ethereum project yellow paper  \textbf{151},  1--32 (2014)

\bibitem{xu2017epbc}
Xu, L., Chen, L., Gao, Z., Xu, S., Shi, W.: Epbc: Efficient public blockchain
  client for lightweight users. In: Proceedings of the 1st Workshop on Scalable
  and Resilient Infrastructures for Distributed Ledgers. p.~1. ACM (2017)

\end{thebibliography}

\appendix
\section{Proof of  Theorem~\ref{thm:probs}}\label{sec:failure_probability}

Based on the proof of Theorem 8 of \cite{orlitsky2005stopping}, we know that for an $(n,k)$ LDPC code that is randomly chosen from a $(c,d)$ ensemble, described in Section \ref{sec:erasurecodeconstruction}, 
the probability that the stopping distance of the code is smaller than $\alpha^* n$ is upper-bounded by 

\begin{eqnarray}\label{eq:stop_distance}
P(\alpha n < \alpha^* n) \le \min\left\{\inf\limits_{0 < \delta < \alpha^*} \left(\! n(\alpha^* - \delta)e^{n\max \limits_{\theta\in[\delta,\alpha^*]}\gamma(\theta)} \!\!+\!\! \sum_{i = 1}^{\delta n -1} I_i \!\right),1\right\}
\end{eqnarray} 
where for $c = d(1-r)$, $h(\theta) = -\theta \log \theta -(1-\theta)\log(1-\theta)$, and $x_0$ as the only positive solution to $\frac{x(1+x)^{d-1}-x}{(1+x)^d  -dx} = \theta$,
\begin{align*}
    \gamma(\theta) & \triangleq \frac{c}{d}\log \left( \frac{(1 + x_0)^d - dx_0}{x_0^{\theta d}}\right) - (c-1)h(\theta),\\
    I_i & \triangleq {n\choose i}\frac{{n(\frac{c}{d} + \frac{\delta c}{2} - \frac{\delta c}{d})\choose \lfloor\frac{ic}{2}\rfloor }(2d-3)^{ic}}{{nc \choose ic}}.
\end{align*}

For $\alpha^* =0.124$, the above upper bound for small $n$ (e.g., $n=256$) becomes degenerated (i.e., reduces to the trivial bound of $1$). In order to obtain a good approximation for all the values of $n$, we approximate the upper bounds on $P(\alpha < 0.124)$ using a slightly smaller undecodable ratio of $0.116$. Then, we evaluate the upper bounds on $P(\alpha n < 0.116 n)$ in (\ref{eq:stop_distance}), for all the considered values of $n$ in Theorem~\ref{thm:probs} to obtain the probabilities in the second row of Table~\ref{tab:expected_successful_attack}.

We note that since 0.116 is very close to 0.124, SPAR's inherent oversampling of intermediate layers will provide sufficient protection for data availability on these layers, so that the light node sampling cost will not increase. 




\section{{\sf Coded Merkle Tree} Library}\label{sec:cmt_library}
We developed in {\sf Rust} a {\sf Coded Merkle Tree} library~\cite{cmt-lib} for {\sf Parity Bitcoin}~\cite{parity-btc} clients. We modify the data structure of the block header to add a new field {\sf coded\_merkle\_roots\_hashes}, which are the hashes of the coded symbols on the last level of the coded Merkle tree constructed from this block.

To use the {\sf Coded Merkle Tree} library on a block, we require the following input parameters from the users:
\begin{itemize}
    \item {\sf BASE\_SYMBOL\_SIZE}: size of a symbol on the base level, and hence the number of systematic symbols on the base level.
    \item {\sf AGGREGATE}: number of hashes to aggregate into a symbol on the next level.
    \item {\sf HEADER\_SIZE}: number of hashes stored in the block header. This also decides the total number of levels in the coded Merkle tree.
    \item Codes for all levels of coded Merkle tree, in forms of sparse representations of their parity-check matrices. 
\end{itemize}

Given the above parameters, {\sf Coded Merkle Tree} implements the following key functionalities:
\begin{itemize}
    \item {\sf coded\_merkle\_roots}: construction of the coded Merkle tree from the block content.
    \item {\sf merkle\_proof}: generating the Merkle proof for any symbol on any level of coded Merkle tree. By design, this returns a set of symbols on the higher level.
    \item {\sf sampling\_to\_decode}: sampling symbols on the base level, together with their Merkle proofs.
    \item {\sf run\_tree\_decoder}: decode the entire coded Merkle tree level by level from the roots. Each level is decoded by running a hash-aware peeling decoder, using the decoded symbols on the previous level as the hash commitments.
    \item {\sf generate\_incorrect\_coding\_proof}: 1) when a coding error is detected, this function returns $d-1$ symbols in a parity equation, and Merkle proofs for all $d$ symbols in that equation; 2) when the peeling process gets stuck before all symbols are decoded, this function returns the indices of the missing symbols as a stopping set.
\end{itemize}

\end{document}